\newtheorem{theorem}{Theorem}
\newtheorem{proposition}{Proposition}
\newtheorem{definition}{Definition}
\newtheorem{lemma}{Lemma}
\newtheorem{example}{Example}
\def\boxspan{\mathit{span}}
\def\params{{\mathsf{q}}}
\newcommand{\R}{{\mathbb{R}}}
\newcommand{\N}{{\mathbb{N}}}
\begin{document}

\def\BibTeX{{\rm B\kern-.05em{\sc i\kern-.025em b}\kern-.08em
    T\kern-.1667em\lower.7ex\hbox{E}\kern-.125emX}}
\markboth{\journalname, VOL. XX, NO. XX, XXXX 2017}
{Author \MakeLowercase{\textit{et al.}}: Preparation of Papers for IEEE Control Systems Letters (August 2022)}

\title{Abstraction-Based Verification of Approximate Pre-Opacity  for
	Control Systems}

\author{Junyao Hou$^{1}$, Siyuan Liu$^{2,3}$, Xiang Yin$^{1}$ and Majid Zamani$^{3,4}$
	\thanks{This work was supported  by  the National Natural Science Foundation of China (62061136004,62173226, 61833012),  the German Research
		Foundation under Grant ZA 873/7-1, and the National Science Foundation under Grant ECCS-2015403.}%
	\thanks{$^1$Department of Automation and Key Laboratory of System Control and Information Processing,	Shanghai Jiao Tong University, China. 
		$^2$Department of Electrical and Computer Engineering, Technical University of Munich, Germany.
		$^3$Department of Computer Science, LMU Munich, Germany.
		$^4$Department of Computer Science, University of Colorado Boulder, USA. 
		The first two authors contributed equally to this work. Corresponding author X.~Yin. E-mail: {\tt\small  \{yinxiang\}@sjtu.edu.cn}.}
}

\maketitle
\thispagestyle{empty}

	\begin{abstract}
	In this paper, we consider the problem of verifying pre-opacity for discrete-time control systems.
	Pre-opacity is an important information-flow security property that secures the intention of a system to execute some secret behaviors in the future. 
	Existing works on pre-opacity only consider  non-metric discrete systems, where it is assumed that intruders can distinguish different output behaviors precisely. 
	However, for continuous-space control systems whose output sets are equipped with metrics (which is the case for most real-world applications), it is too restrictive to assume precise measurements from outside observers. 
	In this paper, we first introduce a concept of approximate pre-opacity by capturing the security level of control systems with respect to the measurement precision of the intruder. Based on this new notion of pre-opacity, we propose a verification approach for continuous-space control systems by leveraging abstraction-based techniques. In particular, a new concept of approximate pre-opacity preserving simulation relation is introduced to characterize the distance between two systems in terms of preserving pre-opacity. This new system relation allows us to verify pre-opacity of complex continuous-space control systems using their finite abstractions. We also present a method to construct pre-opacity preserving finite abstractions for a class of discrete-time control systems under certain stability assumptions.  
\end{abstract}

\begin{IEEEkeywords}
Discrete Event Systems, Opacity, Formal Abstractions
\end{IEEEkeywords}

		\section{Introduction}

Cyber-physical systems (CPS) are the technological backbone of the increasingly interconnected and smart world where security vulnerability can be catastrophic. However, the tight interaction between embedded control software and the physical environment in CPS may expose numerous attack surfaces for malicious exploitation.
In the last decade, the analysis of various security properties for CPS has drawn considerable attention in the literature \cite{basilio2021analysis,liu2022secure}.
The concept of opacity was originally introduced in computer science literature\cite{mazare2004using} for the analysis of cryptographic protocols. Afterwards, opacity was widely investigated in the domain of discrete-event systems (DES) since it allows researchers to analyze the information-flow security for dynamical systems in a formal way \cite{lafortune2018history}.
Roughly speaking, opacity is a confidentiality property that characterizes whether or not a dynamical system will reveal some potentially sensitive behavior to an external malicious observer (intruder) based on the information flow.

In the past decades, different notions of opacity were proposed in the literature to capture different security requirements in the context of DES, including language-based notions in \cite{lin2011opacity} and state-based notions in \cite{saboori2011verification,lennartson2022state,tong2022verification}. The recent results in \cite{balun2021comparing,wintenberg2022general} show that these notions are transformable to each other.
Corresponding to the different opacity notions, various verification and synthesis approaches were also developed in the DES literature; see \cite{lafortune2018history,lin2011opacity,jacob2016overview,liu2022secure,ma2021verification} and the references therein. 
Although the majority of the above-mentioned works on opacity are applied to DES models with discrete state sets, the analysis of opacity for control systems with continuous state sets has become the subject of many studies recently \cite{liu2022secure,an2019opacity,yin2020approximate}.
In particular, a new concept of \emph{approximate opacity} is proposed in \cite{yin2020approximate} which is more applicable to control systems since it allows us to quantitatively evaluate the security level of control systems whose outputs are physical signals.
More recently, a new concept of opacity, called \emph{pre-opacity}, was proposed in \cite{yang2020secure} to characterize whether or not the secret intention of the system can be revealed. In other words, different from the other opacity notions which consider the current or past secret behaviors of the system, pre-opacity captures whether or not an outside observer can be prematurely certain that the system will conduct some secret behaviors in the future. In fact, in many practical scenarios, systems are indeed more interested in hiding their intentions to do something particularly important in the future. 
Nevertheless, the results developed in  \cite{yang2020secure} are again tailored to DES models with discrete state sets, which prevents it from being applied to real-world CPS with continuous state sets.

{\bf{Our contribution.}}
In this paper, we consider the problem of verifying pre-opacity for discrete-time control systems. 
Motivated by the limitations of the results in \cite{yang2020secure}, we first introduce a new concept called \emph{approximate $K$-step pre-opacity} which is more applicable to control systems. To be more specific, unlike discrete-event systems whose state sets are discrete and outputs are logic events, control systems are in general metric systems whose state and output sets are physical signals. Therefore, the notion of pre-opacity in \cite{yang2020secure} is too restrictive by assuming that one can always precisely distinguish between two outputs in the context of control systems.
Note that 
the verification of pre-opacity for control systems is in general undecidable. 
In this work, we propose an abstraction-based pre-opacity verification approach for continuous-space control systems. 
In particular, we first propose a notion of \emph{approximate $ K $-step pre-opacity preserving simulation relation}, which is a system relation that can be used to characterize the closeness between two systems in terms of preserving pre-opacity. Based on this system relation, one can verify pre-opacity of a complex control system using its finite abstraction, instead of directly applying verification algorithms on the original control system which is undecidable. Moreover, for the class of incrementally input-to-state stable nonlinear control systems, we show that one can always construct finite abstractions which preserve pre-opacity of the control systems. 
The proposed abstraction-based methodology is the first in the literature that provides a sound way for verifying pre-opacity of discrete-time control systems with  continuous state spaces.


\section{Preliminaries}\label{sec:pre}

\subsection{Notation}
We denote by $\N$ and $\R$ the set of non-negative integers and real numbers, respectively. They are annotated with subscripts to restrict them in the usual way, e.g., $\mathbb{R}_{\geq 0}$ denotes the set of non-negative real numbers. 
Given a vector \mbox{$x\in\mathbb{R}^{n}$}, we denote by $\Vert x\Vert$ the infinity norm of $x$.
A set $B\subseteq \R^m$ is called a
{\em box} if $B = \prod_{i=1}^m [c_i, d_i]$, where $c_i,d_i\in \R$ with $c_i < d_i$ for each $i\in\{1,\ldots,m\}$.
%
For any set $A = \bigcup_{j=1}^M A_j$ of the form of finite uion of boxes, where $A_j = \prod_{i=1}^n [c_i^j, d_i^j]$, we define $\boxspan(A) = \min\{\boxspan(A_j)\mid j=1,\ldots,M\}$, where $\boxspan(A_j) = \min\{ | d_i^j - c_i^j| \mid i=1,\ldots,m\}$.
For any $\mu \leq \boxspan(A)$, define $[A]_\mu = \bigcup_{j=1}^M [A_j]_\mu$,
where
$[A_j]_\mu = [\R^m]_{\mu}\cap{A_j}$ and $[\R^m]_{\mu}=\{a\in \R^m\mid a_{i}=k_{i}\mu,k_{i}\in\mathbb{Z},i=1,\ldots,m\}$.
We denote the different classes of comparison functions by $\mathcal{K}$, $\mathcal{K}_{\infty}$ and $\mathcal{KL}$, where $\mathcal{K} \!=\! \{\gamma : \mathbb R_{\ge 0}\rightarrow\mathbb R_{\ge 0}   :  \gamma \text{ is continuous, strictly increasing and } \gamma(0)=0\}$; $\mathcal{K}_{\infty} \!=\! \{\gamma \!\in\! \mathcal{K}  \! :\! \lim_{r\rightarrow \infty}\gamma(r) \!=\!\infty\}$; $\mathcal{KL} \!=\! \{\beta : \mathbb R_{\ge 0} \!\times \mathbb R_{\ge 0} \rightarrow\mathbb R_{\ge 0}  :$ for each fixed $s$, the map  $\beta(r,s)$  belongs to class  $\mathcal{K}$  with  respect to  $r$  and, for each fixed  nonzero $r$,  the map $\beta(r,s)$ is decreasing with respect to  $s$  and $\beta(r,s) \rightarrow 0 \text{ as } s \rightarrow \infty \}$.	

\subsection{System Model}

In this paper, the system model that can be used to describe both continuous-space and finite control systems is  a \emph{tuple} 
\[S =(X, X_0, U, \rTo, Y, H ),\]
where
$X$ is a (possibly infinite) set of states,  
$X_0  \subseteq  X$ is a (possibly infinite) set of initial states,
$U$ is a (possibly infinite) set of inputs, 
$\rTo \subseteq  X \times U \times X$ is a transition relation,  
$Y$ is a (possibly infinite) set of outputs, and 
$H: X \to Y$ is the output function.
For the sake of simplicity,  we   also denote a transition $(x,u,x')\in \!\rTo\!$ by $x\rTo{u}x'$, 
where we say that $x'$ is a $u$-successor, or simply successor, of $x$. For each state $x\in X$, we denote by $U(x)$ the set of all inputs defined at $x$, i.e., $U(x)=\{u\in U: \exists x'\in X\text{ s.t. }x\rTo{u}x'\}$, and by $U^{post}_u(x)$ the set of $u$-successors of state $x$.
%
A system $S$ is said to be
\begin{itemize}
	\item \textit{metric}, if the output set $Y$ is equipped with a metric
	$\mathbf{d}:Y\times Y\rightarrow\mathbb{R}_{\geq 0}$;
	\item \textit{finite} (or \textit{symbolic}), if $X$ and $U$ are finite sets;
\end{itemize}
A finite state run of a system $S$ generated from initial state $x_0 \in X_0$ under input sequence $u_1\cdots u_n$ is a sequence of transitions
$ x_{0} \rTo{u_1} x_1\rTo{u_2} \cdots\rTo{u_n} x_n $, where $x_i\rTo{u_{i+1}}x_{i+1}$ for all $0\leq i \leq n-1$. 
The corresponding output run is a sequence of outputs $H(x_0)H(x_1)\cdots H(x_n)$.


\subsection{Exact Pre-Opacity}

In many scenarios, the system wants to hide its intention to reach some $secret$ states at some future instants in the presence of a malicious intruder (outside observer).  
In this article, we adopt a state-based formulation of secrets. Specifically, we assume that $ X_S \subseteq X$ is a set of secret states. In the sequel, we incorporate the secret state set $X_S$ in the system definition and use  $S =(X, X_{0}, X_S, U,  \!\!\rTo \! \!,   Y,  H )$ to denote a metric system. 
We consider that the intruder knows the dynamics of the system and can  observe the output sequences of the system, but cannot actively affect the behavior of the system.
To characterize whether or not the secret intention of a system can be revealed, notions of pre-opacity are proposed in  \cite{yang2020secure}.
Let us review the notion of $K$-step pre-opacity introduced in \cite{yang2020secure} as follows.	


\begin{definition}\label{EXACT}  
	Consider a system $S  = (X, X_{0}, X_S, U,$  $\!\!\rTo \! \!,   Y,  H )$ and a constant $K \in \mathbb{N}$. 
	We say that $ S $ is 
	\textit{$K$-step}  \textit{pre-opaque}
	if for any finite sequence $x_0\rTo^{u_1}x_1\rTo^{u_2} \cdots \rTo^{u_{n}}x_n$, any non-negative integer $ t \geq K $,
	there exist a finite sequence $x_0'\rTo^{u_1'}x_1'$ $\rTo^{u_2'}\cdots\rTo^{u_{n}'}x_n'\rTo^{u_{n}'}\cdots\rTo^{u_{n+t}'}x_{n+t}'$ such that  
	\begin{align*}
	H(x_i)= H(x_i'),\forall  i=\{0,\dots,n\},
	\end{align*}
	and $x_{n+t}'\notin X_S$.
\end{definition}	
Intuitively, pre-opacity requires that the intruder can never predict that the system will visit a secret state for some specific future instant. 
The above definition of $K$-step pre-opacity requires that for any behavior of the system and any $t \geq K$, there exists a behavior whose prefix generates \emph{exactly} the same output and will reach a non-secret state in exact $t$ steps. Thus, in the remainder part of the paper, we will refer to this definition as \emph{exact pre-opacity}.

\subsection{Approximate Pre-Opacity}
The notion of exact pre-opacity introduced in the previous subsection 
essentially assumes that the intruder can always measure each output or distinguish
between two different outputs precisely.  However, for metric systems whose outputs are physical signals, due to the imperfect measurement precision of potential outside observers (which is the case for almost all physical systems), it is very difficult to distinguish two observations if their difference is very small. Therefore, in the following definition, we propose a weak and ``robust" version of pre-opacity called \emph{$\delta$-approximate pre-opacity} which is more applicable to metric systems.  

\begin{definition}\label{INO}  
	Consider a system $ S =(X, X_{0}, X_S, U,$  $\!\!\rTo \! \!,   Y,  H )$ and a constant $\delta \in \mathbb{R}_{\geq 0}$. 
	We say that $ S $ is 
	\textit{$K$-step $\delta$-approximate pre-opaque}
	if for any finite sequence $x_0\rTo^{u_1}x_1\rTo^{u_2} \cdots \rTo^{u_{n}}x_n$, for any non-negative integer $ t \geq K $,
	there exist a finite sequence $x_0'\rTo^{u_1'}x_1'\rTo^{u_2'}\cdots\rTo^{u_{n}'}x_n'\rTo^{u_{n}'}\cdots\rTo^{u_{n+t}'}x_{n+t}'$
	such that 
	\begin{align*}			\underset{i\in\{0,\ldots,n\}}{\text{max}} \textbf{d}(H(x_i), H(x_i'))\leq \delta,
	\end{align*}
	and $x_{n+t}'\notin X_S$.
\end{definition}
As one can easily see, when $\delta\ =\ 0$, approximate pre-opacity boils down to the exact version in Definition~\ref{EXACT}. 
We use the following example to illustrate the notions of exact and approximate pre-opacity. 

%
\begin{example}
	Consider system $S=(X,X_0,X_S,U,\rTo,$ $Y,H)$ shown in Figure~\ref{exmaple1},
	where $X=\{A,B,C,D,E,F,G,H\}$, $X_0=\{A,E\}$, $X_S=\{C,H\}$, $U=\{u,u'\} $, 
	$Y=\{1.1,1.2,2.1,2.3,2.9,3.1,4.0,4.2\}\subseteq \R$ equipped with
	metric  $ \textbf{d} $  defined by $\textbf{ d} (y_1, y_2) = |y_1-y_2|$,  $\forall y_1, y_2 \in Y $.  We mark all secret states by red and the output of each state is specified by a value associated to it. 
	First, one can easily check that $S$ is not exact $ K $-step pre-opaque for any $K \in \N$, since we know immediately that the system is at secret state
	when value $3.1$ or $4.0$ is observed.
	Next, consider an intruder with measurement precision $\delta=0.2$. We claim that $ S $ is $0.2$-approximate $1$-step pre-opaque. For example, consider a finite path $ A\rTo^{u}B $ which  generates output path [1.1][2.3] and will reach a secret state in $1$ step. However, the intruder cannot predict for sure that the system will be at a secret state in $1$ step since there is another path $ E\rTo^{u}F $ generating a indistinguishable output path[1.2][2.1], but will reach a non-secret state $G \notin X_S$. 
	Similarly, when observing [1.2][2.1] (generated by the finite path $E\rTo^{u}F$), the intruder cannot predict for sure that the system will be at a secret state after $2$ steps either, since there exists another path $A\rTo^{u}B\rTo^{u}C\rTo^{u}D$ which will reach non-secret state $D$ in $2$ steps. This protects the possible secret intention of executing $E\rTo^{u}F\rTo^{u}G\rTo^{u}H$. 
	
	\begin{figure}
		\centering
		\includegraphics[width=0.28\textwidth]{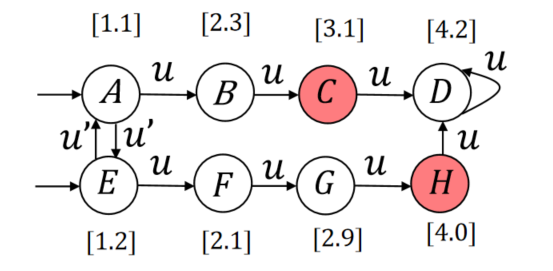}
		\caption{Example to illustrate $\delta$-approximate  $K$-step pre-opacity.} 
		\label{exmaple1}
		\vspace{-0.5cm}
	\end{figure}
\end{example}


\section{Verification of Approximate Pre-Opacity in Finite Systems}
In this section, we show how to verify approximate $K$-step pre-opacity in finite systems. Specifically, we present a necessary and sufficient conditions for $K$-step instant pre-opacity that can be checked by combining the current-state estimation together
with the reachability analysis.

In order to verify $\delta$-approximate $K$-step pre-opacity,
we need to first construct a new system called the \emph{$\delta$-approximate current-state estimator} defined  as follows.
\begin{definition}
	Let $S=(X,X_0,X_S,U,\rTo,Y,H)$ be a metric system, with the metric $\mathbf{d}$ defined over the output set, and a constant $\delta\geq 0$.
	The $\delta$-approximate current-state estimator is a system (without outputs)
	\[
	Obs(S)=(X_{obs},X_{obs,0},U,\rTo_{Obs}),
	\]
	where
	\begin{itemize}
		\item
		$X_{Obs}\subseteq X\times 2^X$ is the set of states;
		\item
		$X_{Obs,0}=\{(x,q)\!\in\! X_0\!\times\! 2^{X_0}:  x'\!\in\! q\Leftrightarrow \mathbf{d}(H(x),H(x'))\!\leq\! \delta    \}$ is the set of initial states;
		\item
		$U$ is the set of inputs, which is the same as the one in $S$;
		\item
		$\rTo_{Obs}\subseteq X_{Obs}\times U\times X_{Obs}$ is the transition function defined by:
		for any $(x,q),(x',q')\in X\times 2^X$ and $u\in U$,   $(x,q)\rTo^{u}_{Obs}(x',q')$ if
		\begin{enumerate}
			\item
			$(x,u,x')\in \rTo$; and
			\item
			$q'\!=\!\cup_{\hat{u}\in U} U^{post}_{\hat{u}}(x) \!\cap\! \{x''\!\in\! X:\!   \mathbf{d}(H(x'),H(x''))\!\leq\! \delta  \}$.
		\end{enumerate}
	\end{itemize}

Intuitively, this observer show us all the possible current state according to the output path until now. For the sake of simplicity, we only consider the part of $Obs(S)$ that is reachable from initial states.
\end{definition}

According to the definition of $ K $-step pre-opacity, we should consider the state status after $n$-step later, $ n\geq K $. Thus, we introduce a concept of $n$-step indicator $\mathcal{T}_n$ that can predicts the state status with respect to the secret sets after $n$ steps later. Specifically, one can find such $n$-step indicator by backtracking $ n $ steps from the set of all secret states.
Formally, we first define an operator $F: 2^X \rTo 2^X$ by: 
\[\forall q\in 2^X:F(q)=\{x\in X : \forall u\in U(x) ~\text{s.t.}~ x\rTo{u}x'\in q \}\]
Then, one can compute $n$-step indicator $\mathcal{T}_n$ by:
 \[\mathcal{T}_n= F^n(\mathcal{T}_0) ~\text{with}~ \mathcal{T}_0=X_s\]

We use the following result to state the main properties of $Obs(S)$.
\begin{proposition}\label{prop:obs}\cite{yin2020approximate}
	Let $S=(X,X_0,X_S,U,\rTo,Y,H)$ be a metric system, with the metric $\mathbf{d}$ defined over the output set, and a constant $\delta\geq 0$.
	Let $Obs(S)=(X_{Obs},X_{Obs,0},U,\rTo_{Obs})$ be its $\delta$-approximate current-state estimator.
	Then for any $(x_0,q_0)\in X_{Obs,0}$ and any finite run
	\[
	(x_0,q_0)\rTo^{u_1}_{Obs}  (x_1,q_1)\rTo^{u_2}_{Obs}  \cdots  \rTo^{u_n}_{Obs}(x_n,q_n),
	\]
	we have
	\begin{enumerate}
		\item
		$x_0\rTo^{u_1}x_{1}\rTo^{u_{2}}\cdots    \rTo^{u_n}x_{n}$; and
		\item
		$q_n=\{x_n'\in X: \exists x_0'\in X_{0},\exists  x_0'\rTo^{u_1'}  x_1'\rTo^{u_{2}'}  \cdots \\ \rTo^{u_n'} x_n'
		\text{ s.t. } \max_{i\in\{0,1,\dots,n\}}\mathbf{d}( H(x_i),H(x_{i}'))\leq \delta  \}$.
	\end{enumerate}
\end{proposition}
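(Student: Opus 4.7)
The plan is to prove both statements by induction on the length $n$ of the observer run in $Obs(S)$. The two parts are essentially independent, with Part~1 being a direct bookkeeping and Part~2 being the substantive claim. I would first dispatch Part~1 in one line: clause~(1) in the definition of $\rTo_{Obs}$ guarantees that every observer transition $(x_i,q_i) \rTo^{u_{i+1}}_{Obs} (x_{i+1},q_{i+1})$ forces $x_i \rTo^{u_{i+1}} x_{i+1}$ in $S$, so concatenating across the given observer run yields precisely the finite state run claimed in~(1).

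For Part~2, the base case $n = 0$ follows immediately from the definition of $X_{Obs,0}$, which gives $q_0 = \{x_0' \in X_0 : \mathbf{d}(H(x_0), H(x_0')) \leq \delta\}$. This matches the claimed characterization when the witness run from $x_0'$ is taken to have length zero, so that the maximum over $i$ ranges only over $i = 0$. For the inductive step, I would fix an observer run of length $n+1$, invoke the induction hypothesis on the prefix of length $n$, and verify the characterization of $q_{n+1}$ by checking both inclusions. In the $(\subseteq)$ direction, any $x_{n+1}' \in q_{n+1}$ is, by clause~(2), a successor in $S$ of some $\tilde{x}$ in the current estimate under some input $u_{n+1}'$ and satisfies $\mathbf{d}(H(x_{n+1}), H(x_{n+1}')) \leq \delta$; the induction hypothesis applied at $\tilde{x}$ produces a run from an initial state whose first $n+1$ outputs are within $\delta$ of $H(x_0), \ldots, H(x_n)$, and appending the new transition completes the required witness. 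The $(\supseteq)$ direction reverses this: truncate a witness run of length $n+1$ to length $n$, use the induction hypothesis to place its endpoint in $q_n$, and push it forward one step into $q_{n+1}$ via the transition rule.

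The main obstacle I foresee is purely one of reading the transition rule correctly: as written, the formula defining $q'$ in clause~(2) mentions successors of the first component $x$, whereas for the induction to close and for $Obs(S)$ to coincide with the usual current-state estimator, this must be interpreted as ranging over all elements of the current estimate, intersected with the $\delta$-neighborhood $\{x'' \in X : \mathbf{d}(H(x'), H(x'')) \leq \delta\}$. Once that reading is in place, no analytic machinery is needed; the inductive step is a routine two-way unfolding of the definitions of $\rTo_{Obs}$ and $X_{Obs,0}$.
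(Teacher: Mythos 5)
The paper offers no proof of this proposition; it is imported verbatim by citation from \cite{yin2020approximate}, so there is no in-text argument to compare against. Your induction on the run length is the standard proof of this fact and is correct: Part~1 is immediate from clause~(1) of the observer transition rule, and both inclusions in your inductive step for Part~2 close properly --- in particular the $(\subseteq)$ direction is unproblematic because the witness inputs $u_i'$ in the characterization of $q_n$ are existentially quantified and need not match the inputs $u_i$ of the observer run. Your remark about the transition rule is also well taken and is in fact essential rather than cosmetic: as literally written, $q'$ collects only successors of the first component $x$, which would make $q'$ independent of $q$ and falsify Part~2 already at $n=1$, since it would omit successors of initial states $x_0'\neq x_0$ that belong to $q_0$. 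The intended definition, under which your induction goes through (and which matches the estimator construction in the cited reference), takes the union of successors over all states in the current estimate $q$, intersected with the $\delta$-ball around $H(x')$.
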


Now, we show the result of this section by providing a verification scheme for $\delta$-approximate $K$-step pre-opacity of finite metric systems.

\begin{lemma}\label{lemma:K}
	Let $S=(X,X_0,X_S,U,\rTo,Y,H)$ be a 
	metric system, with the metric $\mathbf{d}$ defined over the output set, and a constant $\delta\geq 0$.
	Let $Obs(S)=(X_{Obs},X_{Obs,0},U,\rTo_{Obs})$ be its $\delta$-approximate current-state estimator.
	Then, $S$ is $\delta$-approximate $K$-step pre-opaque if and only if
	\begin{equation}\label{eq:thmop-k}
	\forall(x,q)\in X_{Obs},\forall n\geq K: q \not\subseteq \mathcal{T}_{n-k}.
	\end{equation}
\end{lemma}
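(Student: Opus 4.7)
The plan is to prove both directions by leveraging Proposition~\ref{prop:obs} to translate between observer runs in $Obs(S)$ and $\delta$-output-matching runs in $S$, combined with a direct induction establishing that $\mathcal{T}_n = F^n(X_S)$ is exactly the set of states from which \emph{every} length-$n$ run terminates in $X_S$. I will work with contrapositives in both directions to keep the quantifier manipulation clean, treating the subscript in \eqref{eq:thmop-k} as the look-ahead horizon into the future (so that the condition reads: for every reachable observer state $(x,q)$ and every horizon $t \geq K$, $q \not\subseteq \mathcal{T}_t$).

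For the ``only if'' direction I would suppose condition~\eqref{eq:thmop-k} fails: there exist $(x,q) \in X_{Obs}$ and an integer $t \geq K$ with $q \subseteq \mathcal{T}_t$. Because $(x,q)$ is reachable, it lies on some observer run $(x_0,q_0)\rTo^{u_1}_{Obs}\cdots\rTo^{u_n}_{Obs}(x,q)$. Proposition~\ref{prop:obs}(1) furnishes the corresponding system run $x_0 \rTo^{u_1} \cdots \rTo^{u_n} x$, while Proposition~\ref{prop:obs}(2) identifies $q$ with the set of terminal states of all $\delta$-output-matching runs of the same length. Combined with the induction lemma, $q \subseteq \mathcal{T}_t$ says that from every such matching terminal state, every length-$t$ continuation lands in $X_S$. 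Since $t \geq K$, this contradicts Definition~\ref{INO}, so $S$ is not $\delta$-approximate $K$-step pre-opaque.

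For the ``if'' direction I would assume $S$ is not $\delta$-approximate $K$-step pre-opaque and extract from Definition~\ref{INO} a finite run $x_0 \rTo^{u_1} \cdots \rTo^{u_n} x_n$ together with an integer $t \geq K$ such that no $\delta$-output-matching prefix $x_0' \rTo \cdots \rTo x_n'$ admits any length-$t$ extension into $X \setminus X_S$. Lifting this run through the observer construction yields a compatible observer run terminating at some $(x_n, q_n) \in X_{Obs}$, and Proposition~\ref{prop:obs}(2) identifies $q_n$ exactly with the set of such matching terminal states. By choice of the witness, every element of $q_n$ has all of its length-$t$ continuations inside $X_S$, which by the induction lemma puts each of them in $\mathcal{T}_t$; hence $q_n \subseteq \mathcal{T}_t$, contradicting~\eqref{eq:thmop-k}.

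The main obstacle I expect is aligning the universally-quantified structure of $F$ with the existentially-phrased extension clause of Definition~\ref{INO}, and doing so uniformly across all branches of nondeterminism. The load-bearing side lemma, which I would state and prove explicitly first, is: $x \in \mathcal{T}_n$ if and only if for every input sequence $u_1,\ldots,u_n$ and every run $x = y_0 \rTo^{u_1} \cdots \rTo^{u_n} y_n$ the terminal $y_n$ lies in $X_S$. This follows by a routine induction on $n$ from the definition of $F$, with the only subtlety being the vacuous case of states having no enabled inputs, which do belong to every $\mathcal{T}_n$ but do not undermine the correspondence because they also vacuously satisfy the extension clause on the Definition~\ref{INO} side.
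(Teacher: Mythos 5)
Your proposal is correct and follows essentially the same route as the paper's proof: both directions translate between observer runs and $\delta$-output-matching runs via Proposition~\ref{prop:obs} and then argue by contraposition/contradiction that $q\subseteq\mathcal{T}_t$ for some reachable observer state and some $t\geq K$ is exactly the negation of pre-opacity. The only difference is that you isolate as an explicit side lemma the characterization of $\mathcal{T}_n$ (every length-$n$ continuation ends in $X_S$, vacuously for dead-end states), which the paper uses only implicitly, and your reading of the subscript in \eqref{eq:thmop-k} as the look-ahead horizon $\mathcal{T}_n$ matches what the paper's proof actually manipulates.
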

\begin{proof}

	($\Rightarrow$)
	By contraposition:
	suppose that   there exists a run 
	\[
	(x_0,q_0)\rTo^{u_1}_{Obs}   (x_1,q_1)\rTo^{u_{2}}_{Obs}  \cdots  \rTo^{u_{m}}_{Obs}(x_{m},q_{m})
	\]
	and an interger $ n\geq K $, $q_{m} \subseteq \mathcal{T}_n$.
	By Proposition~\ref{prop:obs}, we have 
	\[
	q_{m}=\left\{x_0'\in X:
	\begin{array}{c c}
	\exists  x_0'\rTo^{u_{1}'}  x_1'\rTo^{u_{2}'} \cdots  \rTo^{u_{m}'} x_{m}' \text{ s.t. }\\
	\max_{i\in\{0,1,\dots,{m}\}}\mathbf{d}( H(x_i),H(x_{m-i}'))\leq \delta
	\end{array}
	\right\}.
	\]
	Let us consider the sequence 
	\[
	(x_0',q'_0)\rTo^{u_1'}_{Obs}   (x_1',q_1')\rTo^{u_{2'}}_{Obs}  \cdots  \rTo^{u_{m}'}_{Obs}(x_{m}',q_{m}')
	\]
	Since $q_{m} \subseteq \mathcal{T}_n$ and Proposition~\ref{prop:obs}, we have $q'_{m} \subseteq \mathcal{T}_n$, i.e., $\forall  x_0'\rTo^{u_{1}'}  x_1'\rTo^{u_{2}'} \cdots  \rTo^{u_{m}'} x_{m}'\cdots  \rTo^{u_{m+n}'} x_{m+n}'$, $x_{m+n}' \in X_S$ holds. This means that the system is not $\delta$-approximate $K$-step pre-opaque.
	
%
%
%
	
	($\Leftarrow$)
	By contradiction: suppose that Equation~(\ref{eq:thmop-k}) holds and
	assume  that $S$ is not $\delta$-approximate $K$-step pre-opaque.
	Then,   there exists  an initial state $x_0\in X_0$
	and a sequence of transitions
	$x_0\rTo^{u_1}x_1\rTo^{u_2}\cdots\rTo^{u_m}x_m$
	such that
	there exist
	an initial state $x_0'\in X_0$, a sequence of transitions $x_0'\rTo^{u_1'}x_1'\rTo^{u_2'}\cdots\rTo^{u_m'}x_m'\rTo^{u_{m+1}'}\cdots\rTo^{u_{m+n}'}x_{m+n}'\in X_s$ and an integer $n \geq K$
	such that
	$\max_{i\in\{0,1,\dots,n\}}\mathbf{d}( H(x_i),H(x_i'))\leq \delta$.
	Let us consider the following sequence of transitions in $S_{Obs}$
	\[
	(x_0',q_0)\rTo^{u_1'}_{Obs}   \cdots  \rTo^{u_m'}_{Obs}(x_m',q_m)\rTo^{u_{m+1}'}_{Obs}\cdots\rTo^{u_{m+n}'}_{Obs}(x_{m+n}',q_{m+n}).
	\]
	Then we have $ q_{m+n}\subseteq X_S $, i.e., $q_{m}\subseteq \mathcal{T}_n$. 
	This violate the Equation~(\ref{eq:thmop-k}) holds, i.e.,  $S$ has to be $\delta$-approximate $ K $-step pre-opaque.
\end{proof}
	
	Lemma~\ref{lemma:K} seems provide a way to verify $K$-step pre-opacity. However, it still cannot be directly used for the
verification of $K$-step pre-opacity at finite systems. The main issue is that we need to check whether or not the $ q\nsubseteq \mathcal{T}_n$  for any $n \geq K$, which has infinite number of instants. In the following result, we show this issue can be solved.
\begin{theorem}\label{thm:n}
	Let $S=(X,X_0,X_S,U,\rTo,Y,H)$ be a 
	metric system, with the metric $\mathbf{d}$ defined over the output set, and a constant $\delta\geq 0$.
	Let $Obs(S)=(X_{Obs},X_{Obs,0},U,\rTo_{Obs})$ be its $\delta$-approximate current-state estimator.
	Then, $S$ is $\delta$-approximate $k$-step pre-opaque if and only if
	\begin{equation}\label{eq:thmop-n}
	\forall(x,q)\in X_{Obs}: q \not\subseteq \mathcal{T}_k.
	\end{equation}
\end{theorem}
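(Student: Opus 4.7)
The plan is to sandwich Theorem~\ref{thm:n} between two invocations of Lemma~\ref{lemma:K}: the necessary direction falls out immediately, while the sufficient direction is a short forward induction that propagates the single-index condition~(\ref{eq:thmop-n}) to every $n\geq K$.

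For necessity ($\Rightarrow$), if $S$ is $\delta$-approximate $K$-step pre-opaque then Lemma~\ref{lemma:K} yields $q\not\subseteq\mathcal{T}_n$ for every reachable observer state $(x,q)\in X_{Obs}$ and every $n\geq K$; specializing to $n=K$ produces~(\ref{eq:thmop-n}).

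For sufficiency ($\Leftarrow$), by Lemma~\ref{lemma:K} it suffices to prove $q\not\subseteq\mathcal{T}_n$ for every $(x,q)\in X_{Obs}$ and every $n\geq K$, which I would establish by induction on $n$. The base case $n=K$ is exactly~(\ref{eq:thmop-n}). For the inductive step, assume the property at level $n$ and suppose toward a contradiction that some reachable $(x_m,q_m)$ satisfies $q_m\subseteq\mathcal{T}_{n+1}=F(\mathcal{T}_n)$; by definition of $F$ this means every state of $q_m$ has all of its successors in $\mathcal{T}_n$. I would then prolong the observer run by one transition $(x_m,q_m)\rTo^{u}_{Obs}(x_{m+1},q_{m+1})$ via some $u\in U(x_m)$ and some $u$-successor $x_{m+1}$ of $x_m$. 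By Proposition~\ref{prop:obs}, every $z\in q_{m+1}$ is the endpoint of an $(m{+}1)$-step output-matching run whose $m$-th state $x_m'$ must lie in $q_m\subseteq F(\mathcal{T}_n)$; since $z$ is a successor of $x_m'$, we obtain $z\in\mathcal{T}_n$. Hence $q_{m+1}\subseteq\mathcal{T}_n$ at the reachable observer state $(x_{m+1},q_{m+1})$, contradicting the inductive hypothesis and closing the induction.

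The main obstacle is the validity of the one-step prolongation: the argument needs $U(x_m)\neq\emptyset$, and if $x_m$ is a dead-end the required observer transition is unavailable. I would dispose of this either by invoking a standard non-blocking assumption on the transition relation, or by noting that a dead-end underlying state cannot itself witness a violation of $K$-step pre-opacity, because Definition~\ref{INO} demands an alternative trajectory of length $m+t\geq m+K$ whose witness must come from a non-blocking branch already represented in $X_{Obs}$; once this technicality is handled, the induction goes through and Lemma~\ref{lemma:K} delivers the theorem.
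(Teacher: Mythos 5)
Your proof is correct and follows essentially the same route as the paper: necessity by specializing Lemma~\ref{lemma:K} to $n=K$, and sufficiency via the key observation that one observer transition maps $q\subseteq\mathcal{T}_{j+1}=F(\mathcal{T}_j)$ to $q'\subseteq\mathcal{T}_j$, which the paper applies $n-k$ times forward along the specific violating run while you package the same step as an induction on $n$. The dead-end technicality you flag is genuine, but it does not surface in the paper's version because the paper prolongs the observer run along the already-existing violating run of length $m+n$ (so no fresh successor must be conjured), and is otherwise handled by the standard non-blocking assumption you propose.
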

\begin{proof}
	
	($\Rightarrow$)
	
	The necessity follows directly from Lemma~\ref{lemma:K}.

	($\Leftarrow$)
	By contradiction: suppose that Equation~(\ref{eq:thmop-n}) holds and
	assume  that $S$ is not $\delta$-approximate $K$-step pre-opaque.
	Then,   by Lemma~\ref{lemma:K}, we know there exists  an initial state $x_0\in X_0$
	and a sequence of transitions
	$x_0\rTo^{u_1}x_1\rTo^{u_2}\cdots\rTo^{u_m}x_m$
	such that
	there exist
	an initial state $x_0'\in X_0$, a sequence of transitions $x_0'\rTo^{u_1'}x_1'\rTo^{u_2'}\cdots\rTo^{u_m'}x_m'\rTo^{u_{m+1}'}\cdots\rTo^{u_{m+n}'}x_{m+n}'\in X_s$ and an integer $n \geq K$
	such that
	$\max_{i\in\{0,1,\dots,n\}}\mathbf{d}( H(x_i),H(x_i'))\leq \delta$.
	Let us consider the following sequence of transitions in $S_{Obs}$
	\[
	(x_0',q_0)\rTo^{u_1'}_{Obs}   \cdots  \rTo^{u_m'}_{Obs}(x_m',q_m)\rTo^{u_{m+1}'}_{Obs}\cdots\rTo^{u_{m+n}'}_{Obs}(x_{m+n}',q_{m+n}).
	\]
	Then we have $ q_{m+n}\subseteq X_S $, i.e., $q_{m}\subseteq \mathcal{T}_n$. 
	Since $n\geq K$ in Lemma~\ref{lemma:K}, we can find  the prefix of it:
	\[
	(x_0',q_0)\rTo^{u_1'}_{Obs}   \cdots  (x_m',q_m)\rTo^{u_{m+1}'}_{Obs}\cdots\rTo^{u_{m+n-k}'}_{Obs}(x_{m+n-k}',q_{m+n-k}).
	\]. Then we know 	$x_0'\rTo^{u_1'}x_1'\rTo^{u_2'}\cdots\rTo^{u_{m+n-K}'}x_{m+n-k}'\in X_s$, i.e., $q_{m+n-k} \subseteq \mathcal{T}_k$.
	This violate the Equation~(\ref{eq:thmop-n}) holds, i.e.,  $S$ has to be $\delta$-approximate $ K $-step pre-opaque.

\end{proof}

\section{Approximate Simulation Relation for $K$-Step  Pre-Opacity}\label{sec:3}

In the last section, we introduced notions of exact and approximate pre-opacity for control systems. However, the (approximate) pre-opacity is in general hard (or even infeasible) to check for control systems since there is no systematic way in the literature to check pre-opacity for systems with infinite state sets so far. On the other hand, existing tools and algorithms (such as \cite{yang2020secure}) in DES literature can be leveraged to check pre-opacity for finite systems. 
Therefore, to solve the pre-opacity verification problem for control systems, 
it would be more feasible to verify pre-opacity on their finite abstractions and then carry back the result to the concrete ones. The key to the construction of such finite abstraction is the establishment of formal relations between the concrete and abstract systems. 

In this section, we first propose a new system relation called \emph{approximate $K$-step pre-opacity preserving simulation relation}, and then show the usefulness of the proposed system relation in terms of verifying pre-opacity.

\begin{definition}(Approximate $K$-step Pre-Opacity Preserving Simulation Relation)\label{AKP}
	Consider two metric systems $S_{a}=(X_{a},X_{a0},X_{aS},U_{a},\rTo_{a},Y_a,H_{a})$ and $S_{b}=(X_{b},X_{b0},X_{bS},U_{b},\rTo_{b},Y_b,H_{b})$ with the
	same output sets $Y_a=Y_b$ and metric $\mathbf{d}$.
	Given $\varepsilon\in\mathbb{R}_{\geq 0}$, a relation \mbox{$R\subseteq X_{a}\times X_{b}$} is called an $\varepsilon$-approximate $K$-step pre-opacity preserving simulation relation ($\varepsilon$-AKP simulation relation) from $S_{a}$ to $S_{b}$ 
	if
	\begin{enumerate}
		\item
		\begin{enumerate}
			\item
			$\forall x_{a0}\in X_{a0},\exists x_{b0}\in X_{b0}: (x_{a0},x_{b0})\in R$;
			\item
			$\forall x_{b0}\in X_{b0},\exists x_{a0}\in X_{a0}: (x_{a0},x_{b0})\in R$;
		\end{enumerate}
		\item
		$\forall (x_a,x_b)\in R:\mathbf{d}(H_{a}(x_{a}),H_{b}(x_{b}))\leq\varepsilon$;	
		\item
		For any $(x_a,x_b)\in R$, we have
		\begin{enumerate}
			\item
			$\forall x_a\rTo_{a}^{u_a} x_a',\exists x_b\rTo_{b}^{u_b} x_b':(x_a',x_b')\in R$;
			\item
			$\forall x_b\rTo_{b}^{u_b} x_b',\exists x_a\rTo_{a}^{u_a} x_a':(x_a',x_b')\in R$.
			\item
			$\forall x_b\rTo_{b}^{u_b} x_b'\in X_b\setminus X_{bS},\exists x_a\rTo_{a}^{u_a} x_a'\in X_a\setminus X_{aS}:(x_a',x_b')\in R$.
		\end{enumerate}
	\end{enumerate}
	We say that $S_{a}$ is $\varepsilon$-AKP simulated by $S_{b}$, denoted by $S_a\preceq_A^\varepsilon S_b$, 
	if there exists an $\varepsilon$-AKP  simulation relation $R$ from $S_a$ to $S_b$. A (finite) system $S_{b}$ that simulates $S_{a}$ through the $\varepsilon$-AKP simulation relation is called a pre-opacity preserving (finite) abstraction of $S_{a}$.	Note that the proposed $\varepsilon$-AKP  simulation relation is still a one-sided relation because conditions 1) and 3) are asymmetric.
\end{definition}


The following theorem shows how to use the above proposed simulation relation in terms of verifying pre-opacity. 	
\begin{theorem}\label{thm:ASOP}
	Consider two metric systems  $S_{a}=(X_{a},X_{a0},$ $X_{aS},U_{a},\rTo_{a},Y_a,H_{a})$ and $S_{b}=(X_{b},X_{b0},X_{bS},U_{b},$ $\rTo_{b},Y_b,H_{b})$ with the same output sets $Y_a=Y_b$ and metric $\mathbf{d}$ and
	let $\varepsilon,\delta\in\mathbb{R}_{\geq 0}$.
	If  $S_a\preceq_A^\varepsilon S_b$, 
	then we have:
	\begin{align}
	&S_b\text{ is $\delta$-approximate $ K $-step pre-opaque} \nonumber\\
	\Rightarrow  &S_a \text{ is ($\delta+2\varepsilon$)-approximate $ K $-step pre-opaque}\nonumber.
	\end{align}
\end{theorem}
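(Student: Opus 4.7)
The plan is to take an arbitrary run of $S_a$, lift it forward into $S_b$ along $R$, apply $\delta$-approximate $K$-step pre-opacity of $S_b$ to obtain an ``alternative" run in $S_b$ that avoids secret states at the target horizon, and then lift this alternative run back into $S_a$ via the backward clauses 1(b) and 3(b)--(c) of the AKP simulation. The triangle inequality applied at each indexed step will yield the extra $2\varepsilon$ slack in the output distance.

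More concretely, I would proceed as follows. Fix an arbitrary finite run $x_0 \rTo_a^{u_1} x_1 \rTo_a^{u_2} \cdots \rTo_a^{u_n} x_n$ in $S_a$ and an arbitrary $t \geq K$. Using condition 1(a), pick $y_0 \in X_{b0}$ with $(x_0, y_0) \in R$. Then inductively, for $i = 1, \dots, n$, use condition 3(a) on $(x_{i-1}, y_{i-1}) \in R$ and the transition $x_{i-1} \rTo_a^{u_i} x_i$ to produce $y_{i-1} \rTo_b^{v_i} y_i$ with $(x_i, y_i) \in R$. This yields a run $y_0 \rTo_b^{v_1} \cdots \rTo_b^{v_n} y_n$ in $S_b$ satisfying $\mathbf{d}(H_a(x_i), H_b(y_i)) \leq \varepsilon$ for every $i \in \{0,\dots,n\}$ by condition~2.

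Next, applying $\delta$-approximate $K$-step pre-opacity of $S_b$ to this $S_b$-run and to $t \geq K$, I obtain a run $y_0' \rTo_b^{v_1'} y_1' \rTo_b \cdots \rTo_b^{v_n'} y_n' \rTo_b^{v_{n+1}'} \cdots \rTo_b^{v_{n+t}'} y_{n+t}'$ with $\max_{i \in \{0,\dots,n\}} \mathbf{d}(H_b(y_i), H_b(y_i')) \leq \delta$ and $y_{n+t}' \notin X_{bS}$. I now lift this run back into $S_a$: by condition 1(b) pick $x_0' \in X_{a0}$ with $(x_0', y_0') \in R$; for $i = 1, \dots, n+t-1$ use condition 3(b) inductively to get $x_{i-1}' \rTo_a^{u_i'} x_i'$ with $(x_i', y_i') \in R$; finally, since $y_{n+t}' \notin X_{bS}$, invoke condition 3(c) on the last transition $y_{n+t-1}' \rTo_b y_{n+t}'$ starting from $(x_{n+t-1}', y_{n+t-1}') \in R$ to obtain $x_{n+t-1}' \rTo_a^{u_{n+t}'} x_{n+t}'$ with $x_{n+t}' \notin X_{aS}$ and $(x_{n+t}', y_{n+t}') \in R$.

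The conclusion then follows from the triangle inequality: for every $i \in \{0,\dots,n\}$,
\begin{align*}
\mathbf{d}(H_a(x_i), H_a(x_i')) &\leq \mathbf{d}(H_a(x_i), H_b(y_i)) + \mathbf{d}(H_b(y_i), H_b(y_i')) \\
&\quad + \mathbf{d}(H_b(y_i'), H_a(x_i')) \\
&\leq \varepsilon + \delta + \varepsilon = \delta + 2\varepsilon,
\end{align*}
where the first and third terms are bounded by $\varepsilon$ via condition~2 applied to $(x_i, y_i) \in R$ and $(x_i', y_i') \in R$, and the middle term is bounded by $\delta$ by the $\delta$-approximate pre-opacity of $S_b$. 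Since the original $S_a$-run and $t \geq K$ were arbitrary, $S_a$ is $(\delta + 2\varepsilon)$-approximate $K$-step pre-opaque. The only subtle step is the last one of the backward lift: it is essential to use 3(c) precisely at the terminal transition (rather than 3(b)) so that the witness state $x_{n+t}'$ is guaranteed to lie outside $X_{aS}$, while 3(b) is used for all earlier transitions so no unwanted non-secrecy constraints are imposed on intermediate states.
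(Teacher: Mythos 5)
Your proposal is correct and follows essentially the same route as the paper's proof: forward lift via conditions 1(a), 2, 3(a), apply pre-opacity of $S_b$, backward lift via 1(b), 3(b), 3(c), and conclude by the triangle inequality. Your explicit remark that 3(c) must be invoked only at the terminal transition (with 3(b) used for all earlier ones) makes precise a point the paper leaves implicit, but it is the same argument.
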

\begin{proof}
	Let us consider an arbitrary initial state $x_0\in X_{a0}$, an arbitrary finite run
	$x_0\rTo_a^{u_1} x_1\rTo_a^{u_2}\cdots\rTo_a^{u_n}x_n$ in $S_a$, and any non-negative integer $ t \geq K $.
	Since  $S_a\preceq_A^\varepsilon S_b$, by conditions 1)-a), 2) and 3)-a)  in Definition~\ref{AKP}, there exists an initial state $x_0'\in X_{b0}$ and  a finite run
	$x_0'\rTo_b^{u_1'}x_1'\rTo_b^{u_2'}\cdots\rTo_b^{u_n'}x_n'$ in $S_b$ such that 
	\begin{equation}\label{eq:ep1c}
	\forall i\in \{0,\dots, n\}: \mathbf{d}( H_a(x_i),H_b(x_i')) \leq \varepsilon.
	\end{equation}
	Since $S_b$ is $\delta$-approximate $K$-step  pre-opaque, by Definition~\ref{INO}, for any non-negative integer $ t \geq K $, there exist an initial state $x_0''\in X_{b0}$ and
	a finite run $x_0''\rTo_b^{u_1''}x_1''\rTo_b^{u_2''}\cdots\rTo_b^{u_n''}x_n''\rTo_b^{u_{n}''}x_{n+1}''\cdots\rTo_b^{u_{n+t}''}x_{n+t}''$ such that  $x_{n+t}''\in X_b\setminus X_{bS}$ and
	\begin{equation}\label{eq:ep2c}
	\max_{i\in\{0,\dots,n\}}\mathbf{d}( H_b(x_i'),H_b(x_i''))\leq \delta.
	\end{equation}
	Again, since $S_a\preceq_A^\varepsilon S_b$, by conditions 1)-b), 2), 3)-b) and 3)-c) in Definition~\ref{AKP}, there exists an  initial state  $x_0'''\in X_{a0}$ 	
	and a finite run  $x_0'''\rTo_a^{u_1'''}x_1'''\rTo_a^{u_2'''}\cdots\rTo_a^{u_n'''}x_n'''\rTo_a^{u_{n}'''}x_{n+1}'''\cdots\rTo_a^{u_{n+t}'''}x_{n+t}'''$
	such that  $x_{n+t}'''\in X_a\setminus X_{aS}$  and
	\begin{equation}\label{eq:ep3c}
	\forall i\in \{0,\dots, n+t\}:\mathbf{d}( H_a(x_i'''),H_b(x_i'')) \leq \varepsilon.
	\end{equation}
	Combining inequalities~(\ref{eq:ep1c}),~(\ref{eq:ep2c}),~(\ref{eq:ep3c}), and using the triangle inequality,
	we have
	\begin{equation}\label{eq:ep4c}
	\max_{i\in\{0,\dots,n\}}\mathbf{d}( H_a(x_i),H_a(x_i''')) \leq \delta+2\varepsilon.
	\end{equation}
	Since $x_0\in X_{a0}$ and $x_0\rTo_a^{u_1} x_1\rTo_a^{u_2}\cdots\rTo_a^{u_n}x_n$ are arbitrary, we conclude that
	$S_a$ is $(\delta+2\varepsilon)$-approximate $ K $-step pre-opaque.
\end{proof}

Essentially, Theorem~\ref{thm:ASOP} provides us with a sufficient condition for verifying pre-opacity of control systems using abstraction-based techniques. In particular, when encountered with a complex control system $ S_a $ (possibly with infinite state set), one can build a finite abstraction $ S_b $ for $ S_a $ through the proposed $\varepsilon$-AKP simulation relation. 
Then, one can verify pre-opacity of the  finite abstraction $ S_b $ leveraging existing algorithms in DES literature, and then carry back the verification result to the concrete system $ S_a $ by employing the result obtained in  Theorem~\ref{thm:ASOP}. 
Note that such $ \delta $ and $ \varepsilon $ are parameters that specify two different types of precision. The parameter $ \delta $  is used to specify the intruder's measurement precision under which one can guarantee pre-opacity of a single system, whereas $ \varepsilon $ appeared in the proposed $\varepsilon$-AKP simulation relation is used to describe the ``distance'' between two systems in terms of preserving pre-opacity. Besides, The reader should notice that $ \delta $ relaxation is a single-sided condition only from $S_b$ to $S_a$.

We illustrate the newly proposed $\varepsilon$-AKP simulation relation and the preservation of pre-opacity between two related finite systems by the following example.

\begin{example}
	Consider systems $S_a$ and $S_b$ shown in Figures~\ref{fig:4} and~\ref{fig:5}, respectively.
	All secret states are marked by red and the output of each state is specified by the value associated to it.
	Let us consider the following relation
	$R=\{(A,L),$ $(B,I),(C,I),(D,I),(E,J),(F,J),(G,J),(H,K)\}$.
	We claim that $R$ is an $\varepsilon$-approximate $ K $-step pre-opacity preserving simulation relation from $S_{a}$ to $S_{b}$ when $\varepsilon=0.1$. 
	First, for both initial states $A$ and $H$ in $S_{a}$, we have $L, K \in X_{b0}$ in $S_{b}$ such that $(A, L) \in R$ and $(H, K)\in R$. Thus,
	condition~1) in Definition~\ref{AKP} holds.
	Also, one can easily check that $\mathbf{d}(H_a(x_a),H_b(x_b))\leq 0.1$ for any $(x_a,x_b)\in R$.
	Therefore, condition~2) in Definition~\ref{AKP} holds.
	Moreover, one can easily check that  condition~3)-a), 3)-b) in Definition~\ref{AKP} holds as well.
	For example, 
	for  $(B,I)\in R$ and $B\!\rTo_{a}^{u}\!C$, we can choose $I\!\rTo_{b}^{u}\!I$ such that $(C,I)\in R$.
	Finally, condition~3)-c) in Definition~\ref{AKP} is also satisfied.  As an example, for $(H,K) \in R$ and the transition $K\!\rTo_{b}^{u}\!J \in X_b \setminus X_{bS}$ in $S_b$, there exists a transition  $H\!\rTo_{b}^{u}\!G \in X_a \setminus X_{aS}$ in $S_a$ such that $(G, J) \in R$.
	Therefore, one can conclude that $R$ is an $\varepsilon$-AKP simulation relation from $S_{a}$ to $S_{b}$, i.e.,
	$S_a\preceq_A^{0.1} S_b$. 
	Furthermore, it can be easily seen that $S_b$ is $\delta$-approximate 0-step pre-opaque with $\delta=0.2$. 
	Therefore, according to Theorem~\ref{thm:ASOP}, we can readily conclude that $S_a$ is $0.4$-approximate 0-step pre-opaque, where $0.4=\delta+2\varepsilon$, without applying any verification algorithm to $S_a$ directly.
	\begin{figure}
		\centering
		\subfigure[$S_{a}$]{\label{fig:4}
			\includegraphics[width=0.22\textwidth]{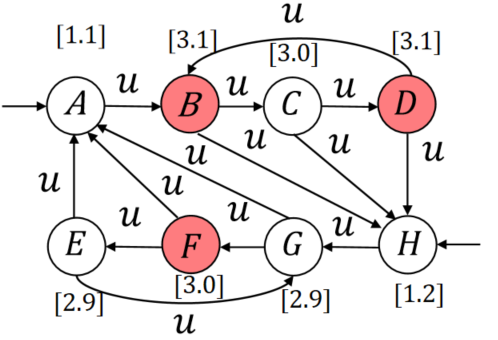}}
		\subfigure[$S_{b}$]{\label{fig:5}
			\includegraphics[width=0.14\textwidth]{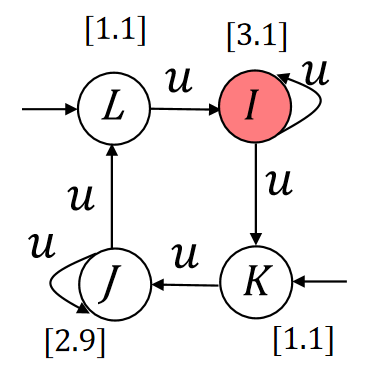}}
		\caption{Example of $\varepsilon$-approximate 0-step pre-opacity preserving simulation relation.} \label{exmaple2}
		\vspace{-0.5cm}
	\end{figure}
\end{example}


\section{Pre-Opacity of Control Systems}\label{sec:cs}
In the previous section, we introduced the concept of approximate pre-opacity preserving simulation relations and discussed how it can be used to solve pre-opacity verification problem for (possibly infinite) systems by leveraging abstraction-based techniques. 
In this section, we proceed to investigate how to construct such pre-opacity preserving finite abstractions for control systems. In particular, we show that for a class of discrete-time control systems under certain stability assumptions, one can build finite abstractions which preserve pre-opacity of the concrete control systems under the proposed $\varepsilon$-AKP simulation relation.

\subsection{Discrete-Time Control Systems}\label{subsec:dtsc}
In this section,  we consider a class of discrete-time control systems of the following form.
\begin{definition}\label{def:sys1}
	A discrete-time control system $\Sigma$ is defined by the tuple
	$\Sigma=(\mathbb X,\mathbb S,\mathbb U,f,\mathbb Y,h)$,
	where $\mathbb X$, $\mathbb S\subseteq \mathbb X$, $\mathbb U$, and $\mathbb Y$ are the state, secret state, input, and output sets, respectively.
	The map $f: \mathbb X\times \mathbb U \rightarrow \mathbb X $ is the state transition function, and $h:\mathbb X \rightarrow \mathbb Y$ is the output map.
	The dynamics of $\Sigma $ is described by difference equations of the form
	\begin{align}\label{eq:2}
	\Sigma:\left\{
	\begin{array}{rl}
	\xi(k+1)\!\!\!\!\!\!&=f(\xi(k),\upsilon(k)),\\
	\zeta(k)\!\!\!\!\!\!&=h(\xi(k)),
	\end{array}
	\right.
	\end{align}
	where $\xi:\mathbb{N}\rightarrow \mathbb X $, $\zeta:\mathbb{N}\rightarrow \mathbb Y$, and $\upsilon:\mathbb{N}\rightarrow \mathbb U$ represent the state, output, and input signals, respectively.
	We write $\xi_{x\upsilon}(k)$ to denote the point reached at time $k$ under the input signal $\upsilon$ from initial condition $x=\xi_{x\upsilon}(0)$, and $\zeta_{x\upsilon}(k)$ to denote  the output corresponding to state $\xi_{x\upsilon}(k)$, i.e. $\zeta_{x\upsilon}(k)=h(\xi_{x\upsilon}(k))$. Throughout this section, we assume that the output map satisfies the following Lipschitz condition: $\Vert h(x)-h(x')\Vert\leq\alpha(\Vert x-x'\Vert)$ for some $\alpha\in\mathcal{K}_\infty$, for all $x,x'\in\mathbb{X}$.
\end{definition}

\subsection{Construction of Finite Abstractions}

Next, we present how to construct finite abstractions which preserve pre-opacity for a class of discrete-time control systems. Specifically, the finite abstraction is built under the assumption that the concrete control system is incrementally input-to-state stable \cite{tran2018advances} as defined next. 

\begin{definition}\label{ISS}
	System $\Sigma=(\mathbb X,\mathbb S,\mathbb U,f,\mathbb Y,h)$ is called incrementally input-to-state stable ($\delta$-ISS) if there exist functions $\beta \in \mathcal{KL}$ and $\gamma \in \mathcal{K}_\infty$ such that $\forall x,x'\in \mathbb X$ and $\forall \upsilon,\upsilon' \in \mathbb{N} \rightarrow \mathbb U$, the following inequality holds for any $k\in\N$:
	\begin{align}\label{ISS_enq}
	\Vert \xi_{x\upsilon}(k)\!-\!\xi_{x'\upsilon'}(k)\Vert\!\leq\!\beta(\Vert x-x'\Vert,k)\!+\!\gamma(\Vert \upsilon-\upsilon'\Vert_\infty).
	\end{align}
\end{definition}

Next, in order to construct pre-opacity preserving finite abstractions for a control system  $\Sigma=(\mathbb X,\mathbb S,\mathbb U,f,\mathbb Y,h)$ in Definition \ref{def:sys1}, we define an associated metric system
$S(\Sigma)=(X,X_0,X_S,U,\rTo,Y,H)$,
where $X\!=\!\mathbb X$, $X_0\!=\!\mathbb X$, $X_S\!=\!\mathbb S$, $U\!=\!\mathbb U$, $Y\footnote{The output set is assumed to be equipped with the infinity norm: $\mathbf{d}(y_1,y_2)=\Vert y_1-y_2\Vert$, $\forall y_1,y_2\in Y$.} = \mathbb Y$, $H=h$, and $x\rTo^{u}x'$ if and only if $x'=f(x,u)$. 
In the sequel, we will use $ S(\Sigma) $ to denote the concrete control systems interchangeably.

Now, we are ready to introduce a finite abstraction for a control system $\Sigma=(\mathbb X,\mathbb S,\mathbb U,f,\mathbb Y,h)$. To do so, from now on, we assume that sets $\mathbb X$, $\mathbb S$ and $\mathbb U$ are of the form of finite union of boxes. Consider a tuple $\mathsf{q}=(\eta,\mu,\theta)$ of parameters, where $0<\eta\leq\min\left\{\boxspan(\mathbb{S}),\boxspan(\mathbb{X}\setminus\mathbb{S})\right\}$ is the state set quantization,  $0<\mu\leq\boxspan(\mathbb{U})$ is the input set quantization, and $ \theta $ is the designed inflation parameter. A finite abstraction of $\Sigma$ is defined as 
\begin{equation} \label{symbolicmodel}
S_\mathsf{q}(\Sigma)=(X_{\mathsf{q}},X_{\mathsf{q}0},X_{\mathsf{q}S},U_\mathsf{q},\rTo_{\mathsf{q}},Y_\mathsf{q},H_\mathsf{q}),
\end{equation}
where $X_{\mathsf{q}}=X_{\mathsf{q}0}=\left[\mathbb X\right]_\eta$, $X_{\mathsf{q}S}=\left[\mathbb S^
\theta\right]_\eta$, where $\mathbb S^{\theta} = \{x \in \mathbb X : \exists x' \in \mathbb S, \text{ s.t. } \Vert x - x'\Vert \leq \theta\}$ denotes the $\theta$-expansion of set $\mathbb{S}$,
$U_\mathsf{q}=\left[\mathbb U\right]_\mu$, $Y_\mathsf{q}=\{h(x_{\mathsf{q}})\,\,|\,\,x_{\mathsf{q}}\in X_{\mathsf{q}}\}$, $H_\mathsf{q}(x_\mathsf{q})=h(x_\mathsf{q})$, $\forall x_\mathsf{q}\in X_{\mathsf{q}}$, and
\begin{align}
\label{constructeq}
x_\mathsf{q}\rTo^{u_\mathsf{q}}_{\mathsf{q}}x'_\mathsf{q} \ \text{if and only if}\ \Vert x'_\mathsf{q}-f(x_\mathsf{q},u_\mathsf{q})\Vert\leq \eta.
\end{align}
Now, we are ready to present the main result of this section, which shows that under some condition over the quantization parameters $\eta$, 
$ \theta $ and $\mu$, the finite abstraction $S_\mathsf{q}(\Sigma)$ constructed in \eqref{symbolicmodel} indeed simulates our concrete control system $S(\Sigma)$ through approximate $ K $-step pre-opacity preserving simulation relation as in Definition~\ref{AKP}. 
\begin{theorem}\label{theorem2}
	Consider a $\delta$-ISS control system $\Sigma=(\mathbb X,\mathbb S,\mathbb U,f,\mathbb Y,h)$ as in Definition~\ref{ISS} and its associated metric system $S(\Sigma)$. For any desired precision $\varepsilon>0$, and any tuple $\mathsf{q}=(\eta,\mu,\theta)$ of quantization parameters satisfying
	\begin{align}\label{teq1}
	\beta\left(\alpha^{-1}(\varepsilon),1\right)+\gamma(\mu)+\eta&\leq\alpha^{-1}(\varepsilon),\\
	\label{teq2}
	\beta\left(\alpha^{-1}(\varepsilon),1\right)+\eta& \leq\theta,
	\end{align}
	we have $S(\Sigma) \preceq^{\varepsilon}_{A} S_\params(\Sigma) $.
\end{theorem}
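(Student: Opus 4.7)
The plan is to exhibit an explicit relation $R \subseteq X \times X_\params$ and verify the three conditions of Definition~\ref{AKP} one by one. The natural candidate, inspired by the standard $\delta$-ISS abstraction arguments, is
\[
R = \bigl\{(x,x_\params) \in X \times X_\params \;:\; \|x - x_\params\| \leq \alpha^{-1}(\varepsilon)\bigr\}.
\]
Condition 2) follows immediately from the Lipschitz assumption on $h$: for $(x,x_\params) \in R$ we get $\mathbf{d}(H(x),H_\params(x_\params)) = \|h(x)-h(x_\params)\| \leq \alpha(\alpha^{-1}(\varepsilon)) = \varepsilon$. Condition 1) is also easy, since $X_0 = \mathbb{X}$ and $X_{\params 0} = [\mathbb{X}]_\eta$: for any $x_0 \in X_0$ pick the nearest grid point $x_{\params 0} \in [\mathbb{X}]_\eta$, which satisfies $\|x_0 - x_{\params 0}\| \leq \eta \leq \alpha^{-1}(\varepsilon)$ (this inequality follows from \eqref{teq1} since $\beta,\gamma$ are nonnegative); conversely, any $x_{\params 0} \in X_{\params 0}$ is itself in $\mathbb{X} = X_0$, so we may take $x_0 = x_{\params 0}$.

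For condition 3)-a), given $(x,x_\params) \in R$ and $x \rTo^{u} x' = f(x,u)$, I would select $u_\params \in [\mathbb{U}]_\mu$ with $\|u - u_\params\| \leq \mu$, and then pick $x'_\params \in [\mathbb{X}]_\eta$ closest to $f(x_\params,u_\params)$, guaranteeing $\|x'_\params - f(x_\params,u_\params)\| \leq \eta$ so that the transition $x_\params \rTo^{u_\params}_\params x'_\params$ exists by \eqref{constructeq}. A triangle inequality split, combined with the $\delta$-ISS bound \eqref{ISS_enq} at time $k=1$, gives
\[
\|x' - x'_\params\| \leq \beta(\|x-x_\params\|,1) + \gamma(\|u-u_\params\|) + \eta \leq \beta(\alpha^{-1}(\varepsilon),1) + \gamma(\mu) + \eta,
\]
which is at most $\alpha^{-1}(\varepsilon)$ by assumption \eqref{teq1}, so $(x',x'_\params) \in R$. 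For condition 3)-b), given $x_\params \rTo^{u_\params}_\params x'_\params$, I would match it by choosing $u = u_\params$ in the concrete system; the same triangle estimate (with $\gamma(\|u-u_\params\|) = \gamma(0) = 0$) again yields $\|x' - x'_\params\| \leq \beta(\alpha^{-1}(\varepsilon),1) + \eta \leq \alpha^{-1}(\varepsilon)$.

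The main obstacle, and the reason for introducing the inflation parameter $\theta$, is condition 3)-c): I must additionally show that if the abstract successor $x'_\params$ is non-secret (i.e., $x'_\params \notin X_{\params S} = [\mathbb{S}^\theta]_\eta$), then the matched concrete successor $x'$ is also non-secret (i.e., $x' \notin \mathbb{S}$). I would argue by contrapositive: suppose $x' \in \mathbb{S}$; since $x'_\params \in [\mathbb{X}]_\eta$ and $\|x' - x'_\params\| \leq \beta(\alpha^{-1}(\varepsilon),1) + \eta \leq \theta$ by condition \eqref{teq2}, the grid point $x'_\params$ lies within $\theta$ of a secret state, hence $x'_\params \in [\mathbb{S}^\theta]_\eta = X_{\params S}$, contradicting the hypothesis. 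Thus $x' \in X \setminus X_S$, as required. Combining all the checks, $R$ is an $\varepsilon$-AKP simulation relation from $S(\Sigma)$ to $S_\params(\Sigma)$, which completes the proof.
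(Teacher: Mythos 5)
Your proposal is correct and follows essentially the same route as the paper's proof: the same relation $R=\{(x,x_\params):\|x-x_\params\|\leq\alpha^{-1}(\varepsilon)\}$, the same Lipschitz and $\delta$-ISS triangle-inequality estimates for conditions 1)--3)-b), and the same use of the $\theta$-inflated secret set for 3)-c). Your explicit contrapositive phrasing of 3)-c) is in fact slightly cleaner than the paper's wording, but the underlying argument is identical.
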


\begin{proof}
	Given a desired precision $\varepsilon>0$ appeared in  Definition~\ref{AKP}, let us consider a relation $R\subseteq  X\times X_{\params}$ defined by:
	{$(x,x_{\params})\in R$}
	if and only if
	{$\Vert x-x_{\params}\Vert\leq\alpha^{-1}(\varepsilon)$}.
	First, according to the construction of $S_{\params}(\Sigma)$ in \eqref{symbolicmodel}, for any initial state $x_{0} \in X_{0}$ in $S(\Sigma)$, there exists an initial state $x_{q0} \in X_{q0}$ in $S_\params(\Sigma)$ such that $\Vert x_{a0} - x_{q0}\Vert \leq \eta$. By \eqref{teq1}, we further have $\eta \leq \alpha^{-1}(\varepsilon)$. Thus, we get that $(x_{0},x_{q0}) \in R$ and condition 1)-a) in Definition~\ref{AKP} readily holds.  
	Moreover,  for any $ x_{\params0}\!\in\! X_{\params0}$, there exists $x_0= x_{\params0}\!\in\! 	X_0$ such that $\Vert x_0-x_{\params0}\Vert=0\leq\alpha^{-1}(\varepsilon)$.
	Hence, \mbox{$(x_0,x_{\params0})\in{R}$} and condition 1)-b) in Definition \ref{AKP} is also satisfied.
	Now consider any \mbox{$(x,x_{\params})\in R$}. 
	By the definition of $R$ and the Lipschitz assumption, we have 
	$\Vert H(x)-H_{\params}(x_{\params})\Vert=\Vert h(x)-h(x_{\params})\Vert\leq\alpha(\Vert x-x_{\params}\Vert)\leq\varepsilon$,
	which shows that condition 2) in Definition~\ref{AKP} is satisfied. 	
	Further, let us proceed to prove condition 3) in Definition \ref{AKP}. First, consider any pair \mbox{$(x,x_{\params})\in R$}. 
	Given any input $u\in U$ and the transition \mbox{$x\rTo^{u} x'=f(x,u)$} in $S(\Sigma)$, let us  choose an input $ u_q\in U_q $ such that $\Vert u-u_{\params}\Vert\leq\mu$,
	where $\mu\leq\boxspan(\mathbb{U})$. 
	From the \mbox{$\delta$-ISS} assumption on $\Sigma$, the distance between $x'$ and \mbox{$f(x_{\params},u_{\params})$} is bounded as:
	\begin{align}
	\Vert  x'-f(x_{\params},u_{\params})\Vert 
	& \stackrel{\eqref{ISS_enq}}\leq\beta\left(\Vert x-x_{\params}\Vert,1\right)+\gamma\left(\Vert u-u_{\params}\Vert\right) \notag\\ 
	&\leq\beta\left(\alpha^{-1}(\varepsilon),1\right)+\gamma\left(\mu\right)\label{b02}.
	\end{align}
	Besides, by the structure of $S_\params(\Sigma)$ as in \eqref{constructeq}, we have
	\begin{equation}
	\Vert f(x_{\params},u_{\params})-x'_{\params}\Vert\leq\eta. \label{ieq5}
	\end{equation}
	Now, combining the inequalities (\ref{teq1}), (\ref{b02}), (\ref{ieq5}), and triangle inequality, we obtain:
	\begin{align*}
	\Vert x'-x'_{\params}\Vert&=\Vert x'-f(x_{\params},u_{\params})+f(x_{\params},u_{\params})-x'_{\params}\Vert\\ \notag
	&\leq\Vert x'-f(x_{\params},u_{\params})\Vert+\Vert f(x_{\params},u_{\params})-x'_{\params}\Vert\\\notag&\leq\beta\left(\alpha^{-1}(\varepsilon),1\right)+\gamma\left(\mu\right)+\eta\leq\alpha^{-1}(\varepsilon).
	\end{align*}
	Therefore, we can conclude that  \mbox{$(x',x'_{\params})\in{R}$} and condition 3)-a) in Definition \ref{AKP} holds. 
	Next, let us show that the condition 3)-b) in Definition \ref{AKP} holds as well. Consider $ x_q $ and any input $ u_q  \in U_{q} $ in $S_\params(\Sigma)$. 
	Let us choose $ u=u_q $. Then, we get the unique transition $x\rTo^{u} x'=f(x,u)$ in $S(\Sigma)$.
	Be leveraging the \mbox{$\delta$-ISS} assumption on $\Sigma$, we have that the distance between $x'$ and \mbox{$f(x_{\params},u_{\params})$} is bounded as:
	\begin{align}
	\Vert x'-f(x_{\params},u_{\params})\Vert\leq&\beta\left(\Vert x-x_{\params}\Vert,1\right)+\gamma\left(\Vert u-u_{\params}\Vert\right) \notag\\
	\leq&\beta\left(\alpha^{-1}(\varepsilon),1\right). \label{iec1}
	\end{align} 
	Based on the structure of $S_\params(\Sigma)$, there exists \mbox{$x'_{\params}\in{X}_{\params}$} s.t.:
	\begin{equation}
	\Vert f(x_{\params},u_{\params})-x'_{\params}\Vert\leq\eta, \label{iec2}
	\end{equation}
	which, by the definition of $S_\params(\Sigma)$ in \eqref{constructeq}, implies the existence of $x_{\params}\rTo^{u_{\params}}_{\params}x'_{\params}$ in $S_{\params}(\Sigma)$.
	Combining inequalities (\ref{teq1}), (\ref{iec1}), (\ref{iec2}), and triangle inequality, we obtain:
	\begin{align*}
	\Vert x'-x'_{\params}\Vert&=\Vert x'-f(x_{\params},u_{\params})+f(x_{\params},u_{\params})-x'_{\params}\Vert\\ \notag
	&\leq\Vert x'-f(x_{\params},u_{\params})\Vert+\Vert f(x_{\params},u_{\params})-x'_{\params}\Vert\\\notag&\leq\beta\left(\alpha^{-1}(\varepsilon),1\right)+\eta\leq\alpha^{-1}(\varepsilon).
	\end{align*}
	Therefore, we conclude that \mbox{$(x'_{\params},x')\in{R}$} and condition 3)-b) in Definition \ref{AKP} holds.
	Finally, let us show that condition 3)-c) in Definition \ref{AKP} holds.
	To this end, we firstly consider an arbitrary transition $x_q\rTo^{u_q} x_q'$ with $x_q'\notin X_S$ in $S_q(\Sigma)$. Similar to the proof of condition 3)-b), we can show the existence of a transition $x\rTo^{u}x'$ in $S(\Sigma)$ where $(x', x_\params ') \in R$ holds, and the input is chosen as $u=u_{\params}\in U_{\params}$. Then by the construction of the secret set in the finite abstraction, one has $X_{\params S} = [\mathbb S^{\theta}]_{\eta}$ with the inflation parameter satisfying $\theta \!\geq\! \beta\left(\alpha^{-1}(\varepsilon),1\right)\!+\!\eta$ and $0 \!<\! \eta \!\leq\! \text{min} \{span(\mathbb S),span(\mathbb{X} \setminus \mathbb S)\}$, which also implies that the size of the non-secret region in  $S_q(\Sigma)$ is smaller than that in $S(\Sigma)$. 
	Therefore, since $(x', x_\params ') \in R$ which implies $\Vert x' - x_\params ' \Vert \leq\beta\left(\alpha^{-1}(\varepsilon),1\right)+\eta \leq \theta$, we obtain that $x ' \notin X_{S}$. Thus, we conclude that condition 3)-c) in Definition \ref{AKP} holds, which completes the proof.
\end{proof}

\section{Example}\label{sec:example}
\begin{figure}
	\centering
	\includegraphics[width=0.22\textwidth]{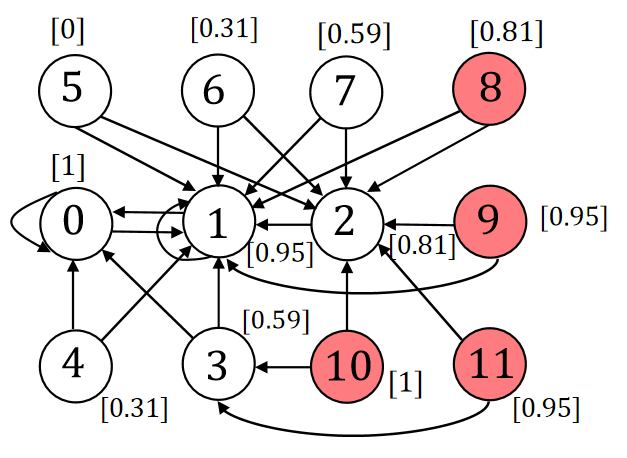}
	\caption{A $0$-step pre-opacity preserving finite abstraction of a control system.} \label{fig:casestudy}
	\vspace{-0.5cm}
\end{figure}
Here, we provide an example to illustrate the proposed abstraction-based pre-opacity verification approach. Consider the following simple control system:
\begin{align}
\Sigma:\left\{\begin{array}{l}
\xi(k+1)=0.2 \xi(k)+v(k) \\
\zeta(k)=\Vert \cos ({0.1\pi} \xi(k))\Vert,
\end{array}\right.   
\end{align}
where the state set is $\mathbb{X} = \mathbb{X}_0 = [0,12)$, the secret set is $\mathbb{X}_S = [11, 12)$, the input set is a singleton $\mathbb{U} = \{0.05\}$, and the output set is $\mathbb{Y} = [0,1]$. The output function of the system satisfies the Lipschitz condition as in Definition~\ref{def:sys1} with $\alpha(r) =0.1\pi r$. 
The main goal of the example is to verify approximate pre-opacity of the system by leveraging the proposed  abstraction-based approach.
Next, we apply our main results to achieve this goal.

First, let us construct a finite abstraction $S_\mathsf{q}(\Sigma)$ of $\Sigma$ which preserves pre-opacity with desired precision $\varepsilon=0.4$ as in Definition~\ref{AKP}. 
Note that by Definition~\ref{ISS}, one can readily check that this control system $\Sigma$ is $\delta$-ISS with $\beta(r,k) = 0.2^kr$ and $\gamma(r)= 2r$.
Next, a tuple of quantization parameters $q = (\eta,\mu,\theta) = (1,0,2.3)$ are chosen such that inequalities \eqref{teq1}-\eqref{teq2} are satisfied.  By Theorem~\ref{theorem2}, we have $S(\Sigma) \preceq^{0.4}_{A} S_\params(\Sigma)$.
Given the quantization parameters $q = (\eta,\mu,\theta) = (1,0,2.3)$, the state set $\mathbb{X}$ is discretized into $12$ discrete states as $X_q = X_{q0} = \{0,1,2,\cdots,11\}$, the discrete secret set is $X_{qS} = \{8,9,10,11\}$, the discrete input set is $U_q = \{0.05\}$, and the discrete output set is $Y_q =\{0, 0.31, 0.59, 0.81, 0.95,1\}$. The obtained finite abstraction $S_\mathsf{q}(\Sigma)$ of $\Sigma$ is shown in Fig.~\ref{fig:casestudy}. The states marked in red represent the secret states, and the output of each state is specified by a value associated to it. Note that the system can be initiated from any state since $X_q = X_{q0}$ and the input $u = 0.05$ is omitted in the figure for the sake of better presentation.
One can readily check that $S_\mathsf{q}(\Sigma)$ is exact $0$-step pre-opaque since for any run generated from any initial state of the system and any future instant $k \geq 0$, there exists another
run with exactly the same output trajectory such that it will reach a non-secret state in exactly $k$ steps. 
As an example, consider a state run $11\rTo^{u}2\rTo^{u}1\rTo^{u}0\rTo^{u}1$ which generates an output run $[0.95][0.81][0.95][1][0.95]$. There exists another state run $9\rTo^{u}2\rTo^{u}1\rTo^{u}0\rTo^{u}1$ which generates exactly the same output behavior, and will reach non-secret states (either $0$ or $1$) in any future time step $k \geq 0$.
Finally, by leveraging Theorem~\ref{thm:ASOP}, we can readily conclude that the concrete system $\Sigma$ is $0.8$-approximate $0$-step pre-opaque without directly applying verification algorithms on it.

\section{Conclusion}\label{sec:conclusion}
In this work, we proposed an abstraction-based verification framework tailored to a security property called pre-opacity for discrete-time control systems. 
The concept of pre-opacity was first extended to an approximate version which is more applicable to control systems with continuous-space outputs. Then, a notion of approximate pre-opacity preserving simulation relation was proposed, based on which one can verify pre-opacity of control systems using their finite abstractions. We also investigated how to construct finite abstractions that preserves pre-opacity for a class of control systems via the proposed system relation.  Finally, an example was presented to illustrate the proposed abstraction-based verification approach.
For future work, we plan to study the problem of controller synthesis to enforce pre-opacity for general control systems.


\bibliographystyle{plain} 
\bibliography{bibfile}

\begin{thebibliography}{10}

\bibitem{an2019opacity}
L.~An and G.~Yang.
\newblock Opacity enforcement for confidential robust control in linear
  cyber-physical systems.
\newblock {\em IEEE Transactions on Automatic Control}, 65(3):1234--1241, 2019.

\bibitem{balun2021comparing}
J.~Balun and T.~Masopust.
\newblock Comparing the notions of opacity for discrete-event systems.
\newblock {\em Discrete Event Dynamic Systems}, 31(4):553--582, 2021.

\bibitem{basilio2021analysis}
J.C. Basilio, C.N. Hadjicostis, and R.~Su.
\newblock Analysis and control for resilience of discrete event systems: Fault
  diagnosis, opacity and cyber security.
\newblock {\em Foundations and Trends{\textregistered} in Systems and Control},
  8(4):285--443, 2021.

\bibitem{jacob2016overview}
R.~Jacob, J.-J. Lesage, and J.-M. Faure.
\newblock Overview of discrete event systems opacity: Models, validation, and
  quantification.
\newblock {\em Annual Reviews in Control}, 41:135--146, 2016.

\bibitem{lafortune2018history}
S.~Lafortune, F.~Lin, and C.N. Hadjicostis.
\newblock On the history of diagnosability and opacity in discrete event
  systems.
\newblock {\em Annual Reviews in Control}, 45:257--266, 2018.

\bibitem{lennartson2022state}
B.~Lennartson, M.~Noori-Hosseini, and C.N. Hadjicostis.
\newblock State-labeled safety analysis of modular observers for opacity
  verification.
\newblock {\em IEEE Control Systems Letters}, 2022.

\bibitem{lin2011opacity}
F.~Lin.
\newblock Opacity of discrete event systems and its applications.
\newblock {\em Automatica}, 47(3):496--503, 2011.

\bibitem{liu2022secure}
S.~Liu, A.~Trivedi, X.~Yin, and M.~Zamani.
\newblock Secure-by-construction synthesis of cyber-physical systems.
\newblock {\em Annual Reviews in Control}, 53:30--50, 2022.

\bibitem{ma2021verification}
Z.~Ma, X.~Yin, and Z.~Li.
\newblock Verification and enforcement of strong infinite-and $k$-step opacity
  using state recognizers.
\newblock {\em Automatica}, 133:109838, 2021.

\bibitem{mazare2004using}
L.~Mazar{\'e}.
\newblock Using unification for opacity properties.
\newblock In {\em Workshop on Issues in the Theory of Security}, volume~4,
  pages 165--176, 2004.

\bibitem{saboori2011verification}
A.~Saboori and C.N. Hadjicostis.
\newblock Verification of $ k $-step opacity and analysis of its complexity.
\newblock {\em IEEE Trans.\ Automation Science and Engineering}, 8(3):549--559,
  2011.

\bibitem{tong2022verification}
Y.~Tong, H.~Lan, and C.~Seatzu.
\newblock Verification of {$K$}-step and infinite-step opacity of bounded
  labeled petri nets.
\newblock {\em Automatica}, 140:110221, 2022.

\bibitem{tran2018advances}
D.N. Tran.
\newblock {\em Advances in stability analysis for nonlinear discrete-time
  dynamical systems}.
\newblock PhD thesis, PhD thesis, The Univ. Newcastle, 2018.

\bibitem{wintenberg2022general}
A.~Wintenberg, M.~Blischke, S.~Lafortune, and N.~Ozay.
\newblock A general language-based framework for specifying and verifying
  notions of opacity.
\newblock {\em Discrete Event Dynamic Systems}, 32(2):253--289, 2022.

\bibitem{yang2020secure}
S.~Yang and X.~Yin.
\newblock Secure your intention: On notions of pre-opacity in discrete-event
  systems.
\newblock {\em arXiv preprint arXiv:2010.14120}, 2020.

\bibitem{yin2020approximate}
X.~Yin, M.~Zamani, and S.~Liu.
\newblock On approximate opacity of cyber-physical systems.
\newblock {\em IEEE Transactions on Automatic Control}, 66(4):1630--1645, 2020.

\end{thebibliography}

\end{document}